\documentclass[11pt]{article}

\usepackage[a4paper,margin=1in]{geometry}
\usepackage[T1]{fontenc}
\usepackage[utf8]{inputenc}
\usepackage{lmodern}
\usepackage{microtype}
\usepackage{amsmath,amssymb,amsthm,mathtools,bm}
\usepackage{algorithm}
\usepackage{algpseudocode}
\usepackage{graphicx}
\usepackage{booktabs}
\usepackage{array}
\usepackage{enumitem}
\usepackage{siunitx}
\usepackage{xcolor}
\usepackage{hyperref}
\usepackage[nameinlink,noabbrev]{cleveref}
\usepackage{tabularx}
\usepackage{float}
\usepackage{placeins}
\usepackage{multirow}
\usepackage{longtable}

\hypersetup{
    colorlinks=true,
    linkcolor=black,
    citecolor=black,
    urlcolor=black
}
\setlist[itemize]{leftmargin=1.6em,itemsep=2pt,topsep=3pt}

\newcommand{\E}{\mathbb{E}}
\newcommand{\Prob}{\mathbb{P}}
\newcommand{\R}{\mathbb{R}}

\newcommand{\CN}{\mathcal{CN}}
\newcommand{\norm}[1]{\left\lVert #1\right\rVert}
\newcommand{\abs}[1]{\left\lvert #1\right\rvert}
\newcommand{\vect}[1]{\bm{#1}}

\DeclareMathOperator*{\argmin}{arg\,min}

\newtheorem{theorem}{Theorem}
\newtheorem{proposition}{Proposition}
\newtheorem{remark}{Remark}

\newcommand{\safeincludegraphics}[2][]{%
  \IfFileExists{#2}{\includegraphics[#1]{#2}}{%
    \fbox{\parbox[c][0.25\textheight][c]{0.9\linewidth}{\centering Missing figure file\\\texttt{\detokenize{#2}}}}%
  }%
}

\title{A Baseline Mobility-Aware IRS-Assisted Uplink Framework\\With Energy-Detection-Based Channel Allocation}
\author{Ardavan Rahimian\\School of Engineering, Ulster University, Belfast, U.K.\\\texttt{a.rahimian@ulster.ac.uk}}
\date{}

\begin{document}
\maketitle

\begin{abstract}
This paper develops a self-contained framework for studying a mobility-aware intelligent reflecting surface (IRS)-assisted multi-node uplink under simplified but explicit modeling assumptions. The considered system combines direct and IRS-assisted narrowband propagation, geometric IRS phase control with finite-bit phase quantization, adaptive IRS-user focusing based on inverse-rate priority weights, and sequential channel allocation guided by energy detection. The analytical development is restricted to a physics-based two-hop cascaded path-loss formulation with appropriate scaling, an expectation-level reflected-power characterization under the stated independence assumptions, and the exact chi-square threshold for energy detection, together with its large-sample Gaussian approximation. A MATLAB  implementation is used to generate a sample run, which is interpreted as a numerical example. This work is intended as a consistent, practically-aligned baseline to support future extensions involving richer mobility models or more advanced scheduling policies.
\end{abstract}

\noindent\textbf{Keywords:} adaptive scheduling, energy detection, intelligent reflecting surface, mobility-aware uplink, sequential channel allocation, wireless propagation.

\section{Introduction}
Intelligent reflecting surfaces (IRSs) or programmable metasurfaces provide advanced mechanisms and methods for shaping radio propagation by adjusting the phases of many reflecting elements \cite{WuTWC2019,WuCommMag2020,DiRenzoJSAC2020,WuTutorialTCom2021,BjornsonSPM2022}. Related studies have also examined physics-based path-loss modeling, mobility-oriented IRS behavior, and comparisons with other relaying paradigms \cite{OzdoganWCL2020,WeiTVT2025,BjornsonWCL2020}. In multi-user uplinks, IRS control interacts with user prioritization, interference, and channel access. Under mobility, even a simplified model must track changing geometry, time-varying fading, and per-slot scheduling choices, as well as the associated factors in such distributed networks.

The purpose of this work is to be practical and to serve as a baseline implementation of such subsystems for low-power wireless networks or Internet of Things (IoT) scenarios. It establishes a coherent foundation, retaining only the segments and associated analyses that are most suitable for such use cases. Within this scope, the framework combines the following: a narrowband uplink model with direct and IRS-assisted propagation; geometric IRS phase alignment with finite-bit quantization; inverse-rate priority-weighted random selection of the IRS focus user; and sequential channel allocation driven by an exact energy-detection threshold.

The key contributions of this proposed framework are as follows: it adopts a physics-based two-hop IRS path-loss model with the correct $L_0^2$ cascaded scaling \cite{OzdoganWCL2020}; it states a mean reflected-power proposition under explicit independence assumptions; it derives and uses the exact chi-square threshold for energy detection, together with the standard large-sample Gaussian approximation \cite{Urkowitz1967,Digham2007}; it formalizes the implemented adaptive IRS-focus policy and the sensing-guided sequential channel-allocation rule; and it reports one seeded numerical example as an illustrative run only, while separating numerical output from theoretical claims.

The work first formulates a simplified mobility-aware IRS-assisted uplink model and defines the associated notation and propagation assumptions. It then develops the analytical results that are justified within this scope: the direct and cascaded path-loss structure, an expectation-level characterization of the reflected power under independent small-scale fading, and the exact energy-detection threshold used for sensing-guided channel allocation. On top of this model, the paper formalizes the implemented inverse-rate adaptive focusing rule and the sequential channel-assignment procedure, and finally reports a seeded numerical example generated by the MATLAB simulator to illustrate the behavior of the framework. In this way, the paper aims to provide a foundational reference that is suitable for subsequent refinement and extension.

\section*{Notation}
\begin{table}[H]
\centering
\caption{Main notation used in the paper.}
\label{tab:notation}
\small
\begin{tabularx}{\linewidth}{@{}lX@{}}
\toprule
Symbol & Meaning \\
\midrule
$K$ & Number of uplink nodes/users. \\
$N=N_xN_y$ & Number of IRS elements, with $N_x$ and $N_y$ along the two array dimensions. \\
$\vect{u}_k(t)$, $\vect{b}$, $\vect{r}_0$, $\vect{r}_n$ & Position of node $k$, base station (BS), IRS center, and IRS element $n$, respectively. \\
$\lambda$, $f_c$, $c$ & Wavelength, carrier frequency, and speed of light. \\
$\rho$ & IRS reflection efficiency. \\
$h_{k,\mathrm d}(t)$, $h_{k,\mathrm{IRS}}(t)$, $h_k(t)$ & Direct, IRS-assisted, and total uplink channel of user $k$. \\
$\beta_{\mathrm d}(d)$ & Large-scale direct-link gain at distance $d$. \\
$\beta_{12}(d_1,d_2)$ & Two-hop cascaded large-scale gain for user-to-IRS distance $d_1$ and IRS-to-BS distance $d_2$. \\
$\psi_n(t)$ & IRS phase applied by element $n$ at slot $t$. \\
$\vect{v}_k(t)$, $f_D$ & Node velocity and corresponding Doppler shift. \\
$P_{\mathrm{tx}}$, $\sigma^2$ & User transmit power and receiver noise power. \\
$c_k(t)$, $C$ & Channel assigned to user $k$ and total number of orthogonal channels. \\
$M$, $P_{\mathrm{fa}}$, $\gamma$ & Number of sensing samples, target false-alarm probability, and energy-detection threshold. \\
$R_k(t)$, $\bar R_k(t)$ & Instantaneous rate and sliding-window average rate of user $k$. \\
$W$, $\beta$, $\epsilon$ & Averaging-window length, priority exponent, and small regularization constant in the scheduler. \\
\bottomrule
\end{tabularx}
\end{table}

\section{System Model and Channel Characterization}
Consider an uplink network with $K$ single-antenna devices and a single-antenna BS. The BS is located at $\vect{b}\in\R^3$. An IRS with $N=N_xN_y$ reflecting elements is centered at $\vect{r}_0\in\R^3$ and its $n$th element is at position $\vect{r}_n$. Device $k$ is at position $\vect{u}_k(t)\in\R^3$ at slot $t$. Devices move inside a bounded region $\mathcal{A}\subset\R^3$ with reflective boundary conditions.

\subsection{Composite uplink channel}
The complex baseband uplink channel from node $k$ to the BS is modeled as
\begin{equation}
    h_k(t)=h_{k,\mathrm d}(t)+h_{k,\mathrm{IRS}}(t),
    \label{eq:htot}
\end{equation}
where $h_{k,\mathrm d}(t)$ is the direct path and $h_{k,\mathrm{IRS}}(t)$ is the IRS-assisted component.

The direct channel is written as
\begin{equation}
    h_{k,\mathrm d}(t)=\sqrt{\beta_{\mathrm d}(d_{kb}(t))}\,\tilde h_{k,\mathrm d}(t)e^{-j\frac{2\pi}{\lambda}d_{kb}(t)},
    \label{eq:hdir}
\end{equation}
where
\begin{equation}
    d_{kb}(t)=\norm{\vect b-\vect u_k(t)}
\end{equation}
and $\tilde h_{k,\mathrm d}(t)\sim\CN(0,1)$ in the simulator.

The IRS-assisted component is modeled as
\begin{equation}
    h_{k,\mathrm{IRS}}(t)=\sum_{n=1}^{N}\rho\,g_{k\to n}(t)g_{n\to b}e^{j\psi_n(t)},
    \label{eq:hirs}
\end{equation}
where $\rho\in[0,1]$ is the reflection efficiency and $\psi_n(t)$ is the phase shift of element $n$.

\subsection{Two-hop cascaded channel model}
The user-to-element and element-to-BS coefficients are modeled as
\begin{align}
    g_{k\to n}(t) &= \sqrt{\beta_{\mathrm d}(d_{kn}(t))}\,\tilde g_{k,n}(t)e^{-j\frac{2\pi}{\lambda}d_{kn}(t)}, \label{eq:gkn}\\
    g_{n\to b} &= \sqrt{\beta_{\mathrm d}(d_{nb})}\,\tilde g_{n,b}e^{-j\frac{2\pi}{\lambda}d_{nb}}, \label{eq:gnb}
\end{align}
with
\begin{equation}
    d_{kn}(t)=\norm{\vect r_n-\vect u_k(t)},
    \qquad
    d_{nb}=\norm{\vect b-\vect r_n}.
\end{equation}
The BS and the IRS are fixed, so the element-to-BS link is treated as quasi-static over the simulated run. Therefore, the cascaded large-scale factor is
\begin{equation}
    \beta_{12}(d_{kn}(t),d_{nb})=\beta_{\mathrm d}(d_{kn}(t))\beta_{\mathrm d}(d_{nb}).
    \label{eq:beta12def}
\end{equation}
Substituting \cref{eq:gkn,eq:gnb} into \cref{eq:hirs} yields
\begin{equation}
    h_{k,\mathrm{IRS}}(t)=\rho\sum_{n=1}^{N}\sqrt{\beta_{12}(d_{kn}(t),d_{nb})}\,\tilde g_{k,n}(t)\tilde g_{n,b}
    e^{-j\frac{2\pi}{\lambda}(d_{kn}(t)+d_{nb})}e^{j\psi_n(t)}.
    \label{eq:hirs_expanded}
\end{equation}
This is also the form implemented in the code before summation over the IRS elements.

\subsection{Physics-based path loss}
The direct and cascaded large-scale gains are modeled as
\begin{align}
    \beta_{\mathrm d}(d) &=
    \begin{cases}
        1, & d\le d_0,\\
        (L_0d^{\alpha})^{-1}, & d>d_0,
    \end{cases}
    \label{eq:pld}\\
    \beta_{12}(d_1,d_2) &=
    \begin{cases}
        1, & d_1d_2\le d_0^2,\\
        (L_0^2d_1^{\alpha}d_2^{\alpha})^{-1}, & d_1d_2>d_0^2,
    \end{cases}
    \label{eq:pl12}
\end{align}
where
\begin{equation}
    L_0=\left(\frac{4\pi}{\lambda}\right)^2,
    \qquad
    d_0=\frac{\lambda}{2\pi},
\end{equation}
and $\alpha$ is the path-loss exponent. The essential point is the two-hop scaling factor $L_0^2$, which is consistent with the physics-based cascaded path-loss model in \cite{OzdoganWCL2020}.

\subsection{IRS phase design}
For a designated focus user $k^{\star}(t)$, the simulator uses geometric phase alignment,
\begin{equation}
    \psi_n^{\mathrm{geom}}(t)=\mathrm{wrap}_{[-\pi,\pi)}\!\left(\frac{2\pi}{\lambda}\bigl(d_{k^{\star}n}(t)+d_{nb}\bigr)\right),
    \label{eq:psigeom}
\end{equation}
which compensates the deterministic path-length phase term in \cref{eq:hirs_expanded}. The code also supports an idealized perfect channel state information (CSI)-based phase-design mode,
\begin{equation}
    \psi_n^{\mathrm{CSI}}(t)=\mathrm{wrap}_{[-\pi,\pi)}\!\left(\frac{2\pi}{\lambda}\bigl(d_{k^{\star}n}(t)+d_{nb}\bigr)-\arg \tilde g_{k^{\star},n}(t)-\arg \tilde g_{n,b}\right),
    \label{eq:psicsi}
\end{equation}
but the reported run uses geometric control only. After phase design, the applied IRS phases are quantized using $b$ bits. In the reported example, $b=3$, so each phase is mapped to one of $2^b=8$ uniformly spaced quantization levels.

\begin{proposition}[Mean reflected-power scaling]
Assume for a fixed user $k$ that, for each IRS element $n$, the small-scale coefficients
$\tilde g_{k,n}(t)$ and $\tilde g_{n,b}$ are zero-mean, unit-variance, and mutually independent. Also, assume that the pairs
$\{(\tilde g_{k,n}(t),\tilde g_{n,b})\}_{n=1}^{N}$
are independent across different IRS elements. Then
\begin{equation}
    \E\!\left[\abs{h_{k,\mathrm{IRS}}(t)}^2\right]
    =\rho^2\sum_{n=1}^{N}\beta_{12}(d_{kn}(t),d_{nb}).
    \label{eq:meanscaling}
\end{equation}
Consequently, if the per-element large-scale factors remain of comparable order, the mean reflected power scales linearly with $N$, i.e., as $\mathcal{O}(N)$.
\end{proposition}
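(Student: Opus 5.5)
The plan is to write $h_{k,\mathrm{IRS}}(t)$ in the expanded form \cref{eq:hirs_expanded} as a sum $\rho\sum_{n=1}^N a_n X_n$. Here
\[
a_n=\sqrt{\beta_{12}(d_{kn}(t),d_{nb})}\,e^{-j\frac{2\pi}{\lambda}(d_{kn}(t)+d_{nb})}e^{j\psi_n(t)}
\]
collects the large-scale and phase factors, and $X_n=\tilde g_{k,n}(t)\tilde g_{n,b}$ is the small-scale product. The first point to fix is that the $a_n$ are treated as deterministic for the given slot. The geometry is given at slot $t$, and the phases $\psi_n(t)$ come from the geometric rule \cref{eq:psigeom} followed by quantization, so they depend only on positions and not on the fading. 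This matters: under the CSI mode \cref{eq:psicsi} the phases would depend on $\tilde g$, and the argument below would fail. I will say explicitly that the proposition concerns phases chosen independently of the small-scale coefficients, which covers the reported geometric mode. With this in place, $|a_n|^2=\beta_{12}(d_{kn}(t),d_{nb})$, because the exponentials have unit modulus.

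Next, expand
\[
\E\bigl[|h_{k,\mathrm{IRS}}|^2\bigr]=\rho^2\sum_{n}\sum_{m}a_n a_m^{*}\,\E\bigl[X_nX_m^{*}\bigr].
\]
For the diagonal terms, the independence of $\tilde g_{k,n}$ and $\tilde g_{n,b}$ gives
\[
\E|X_n|^2=\E|\tilde g_{k,n}|^2\,\E|\tilde g_{n,b}|^2=1\cdot 1=1.
\]
For the off-diagonal terms with $m\neq n$, the pairs indexed by $n$ and by $m$ are independent, so $\E[X_nX_m^*]=\E[X_n]\,\E[X_m]^*$. Within a pair the two factors are independent and zero-mean, so $\E[X_n]=\E[\tilde g_{k,n}]\,\E[\tilde g_{n,b}]=0$. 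Every cross term therefore vanishes. This leaves $\rho^2\sum_n|a_n|^2$, which is exactly \cref{eq:meanscaling}. All the needed moments are finite because the coefficients have unit variance, so the expansion and the factorizations are legitimate.

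For the scaling claim, I will interpret ``comparable order'' as the existence of constants $0<c_1\le c_2$, independent of $N$, such that $c_1\le\beta_{12}(d_{kn}(t),d_{nb})\le c_2$ for all $n$. This holds, for example, when the IRS aperture is small relative to $d_{kn}$ and $d_{nb}$, since then \cref{eq:pl12} is nearly constant across elements. The sum then lies between $\rho^2c_1N$ and $\rho^2c_2N$, which gives $\Theta(N)$ and in particular $\mathcal{O}(N)$.

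I expect the main obstacle to be conceptual rather than computational: the independence of the phases $\psi_n(t)$ from the fading must be made explicit. The cross-term cancellation relies on it, and it is also what separates this linear-in-$N$ average-power law from the $N^2$ coherent gain that CSI-based alignment would produce.
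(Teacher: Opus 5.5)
Your proof is correct and matches the paper's essentially step for step. It uses the same decomposition $h_{k,\mathrm{IRS}}(t)=\sum_n a_n(t)X_n(t)$ and the same double-sum expansion: cross terms vanish by independence across elements together with zero means, and $\E\abs{X_n}^2=1$ by independence within each pair.

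Your explicit requirement that the phases be independent of the fading is what the paper assumes without saying when it pulls $a_n(t)$ out of the expectation. Your constants $c_1,c_2$ just formalize its ``comparable order'' remark. One small overstatement: after the warm-up, $k^{\star}(t)$ is drawn from past rates, and those depend on the quasi-static $\tilde g_{n,b}$, so the adaptive run is covered only if the phase profile is treated as exogenous, which the paper also does implicitly.
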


\begin{proof}
Define
\begin{equation}
    X_n(t)=\tilde g_{k,n}(t)\tilde g_{n,b}
\end{equation}
and
\begin{equation}
    a_n(t)=\rho\sqrt{\beta_{12}(d_{kn}(t),d_{nb})}
    e^{-j\frac{2\pi}{\lambda}(d_{kn}(t)+d_{nb})}e^{j\psi_n(t)}.
\end{equation}
Then \cref{eq:hirs_expanded} can be written as
\begin{equation}
    h_{k,\mathrm{IRS}}(t)=\sum_{n=1}^{N} a_n(t)X_n(t).
\end{equation}
Therefore,
\begin{align}
    \E\!\left[\abs{h_{k,\mathrm{IRS}}(t)}^2\right]
    &= \E\!\left[\left(\sum_{n=1}^{N} a_n(t)X_n(t)\right)
    \left(\sum_{m=1}^{N} a_m(t)X_m(t)\right)^{\!*}\right] \\
    &= \sum_{n=1}^{N}\sum_{m=1}^{N}
    a_n(t)a_m^\ast(t)\,\E\!\left[X_n(t)X_m^\ast(t)\right].
\end{align}
For \(n\neq m\), independence across IRS elements and the zero-mean assumptions imply
\begin{equation}
    \E[X_n(t)X_m^\ast(t)] = 0.
\end{equation}
For \(n=m\),
\begin{align}
    \E[\abs{X_n(t)}^2]
    &= \E[\abs{\tilde g_{k,n}(t)}^2]\,
       \E[\abs{\tilde g_{n,b}}^2] \\
    &= 1.
\end{align}
Hence, only the diagonal terms remain, and thus
\begin{equation}
    \E\!\left[\abs{h_{k,\mathrm{IRS}}(t)}^2\right]
    = \sum_{n=1}^{N}\abs{a_n(t)}^2
    = \rho^2 \sum_{n=1}^{N}\beta_{12}(d_{kn}(t),d_{nb}),
\end{equation}
which proves \cref{eq:meanscaling}.
\end{proof}

\begin{remark}
This work does not claim a deterministic instantaneous $\Theta(N^2)$ scaling law for the reflected power. Stronger coherent-sum scaling statements would require additional assumptions and are not needed for the present framework.
\end{remark}

\subsection{Simplified mobility and Doppler model}
The mobility model is intentionally simple. Each node is initialized with a random planar heading and a speed no larger than $v_{\max}$. At slot index $\ell$, the position evolves according to
\begin{equation}
    \vect u_k[\ell+1]=\vect u_k[\ell]+\Delta t\,\vect v_k[\ell],
    \label{eq:kin}
\end{equation}
followed by specular reflection if a boundary of $\mathcal A$ is crossed.

The small-scale fading evolution used in the code is hybrid. First, a coherence-time proxy is formed from the maximum Doppler shift,
\begin{equation}
    f_{D,\max}=\frac{\norm{\vect v_k[\ell]}}{\lambda},
    \qquad
    T_{\mathrm{coh}}\approx \frac{0.423}{f_{D,\max}},
    \label{eq:tcoh}
\end{equation}
with a small lower bound used in the implementation to avoid division by very small values. When the elapsed time since the last redraw exceeds this proxy, the direct-link coefficient and the user-to-IRS coefficients associated with that node are redrawn as fresh circularly symmetric complex Gaussian variables, while the IRS-to-BS coefficients are kept quasi-static over the run. This method is a lightweight abstraction rather than a full mobility model as in other studies \cite{WeiTVT2025}.

Between redraws, phase evolution is modeled through a Doppler shift. For a propagation direction represented by a unit vector $\hat{\vect k}$,
\begin{equation}
    f_D=\frac{\vect v_k[\ell]^\top \hat{\vect k}}{\lambda}
    =\frac{\norm{\vect v_k[\ell]}}{\lambda}\cos\theta,
    \label{eq:dopp}
\end{equation}
so the phase evolution of the corresponding small-scale coefficient is approximated by
\begin{equation}
    \tilde h(t+\Delta t)=\tilde h(t)e^{j2\pi f_D\Delta t}.
    \label{eq:evol}
\end{equation}
This is a simplified mobility-aware fading abstraction, not a full Jakes-type stochastic time-correlation model \cite{Sklar1997}.

\section{SINR Model and Energy-Based Channel Allocation}
At each slot, every user is assigned one of $C$ orthogonal channels. Users assigned to the same channel are modeled as mutual interferers in the signal-to-interference-plus-noise ratio (SINR) calculation. Let $c_k(t)\in\{1,\dots,C\}$ denote the channel assigned to user $k$ at slot $t$, and define the instantaneous received signal power as
\begin{equation}
    P_k^{\mathrm{rx}}(t)=P_{\mathrm{tx}}\abs{h_k(t)}^2.
    \label{eq:prx}
\end{equation}
Then the simulator computes the instantaneous SINR of user $k$ as
\begin{equation}
    \mathrm{SINR}_k(t)=\frac{P_k^{\mathrm{rx}}(t)}{\sum_{j\neq k:\,c_j(t)=c_k(t)} P_j^{\mathrm{rx}}(t)+\sigma^2},
    \label{eq:sinr}
\end{equation}
where $\sigma^2$ is the receiver noise power.

With bandwidth $B$, the instantaneous rate is
\begin{equation}
    R_k(t)=B\log_2\bigl(1+\mathrm{SINR}_k(t)\bigr),
    \label{eq:rate}
\end{equation}
and in the simulator this rate is set to zero whenever $\mathrm{SINR}_k(t)<\gamma_{\mathrm{dec,lin}}$, where
\begin{equation}
    \gamma_{\mathrm{dec,lin}}=10^{\gamma_{\mathrm{dec}}/10}.
\end{equation}
For the reported run, $\gamma_{\mathrm{dec}}=-10$ dB.

\subsection{Energy detector and exact threshold}
For a candidate channel $c$, the sensing statistic uses $M$ complex samples,
\begin{equation}
    T_c=\sum_{m=1}^{M}\abs{y_c[m]}^2.
    \label{eq:tc}
\end{equation}
Under the noise-only hypothesis $\mathcal H_0$, the code assumes that
\begin{equation}
    y_c[m]\sim \CN(0,\sigma^2), \qquad m=1,\dots,M,
\end{equation}
and that these samples are independent across $m$. Since $2\abs{y_c[m]}^2/\sigma^2\sim\chi_2^2$, summing over $M$ samples gives the standard result below.

\begin{theorem}[Exact energy-detection threshold]
Under $\mathcal H_0$, assume that
\begin{equation}
    y_c[m]\sim \CN(0,\sigma^2), \qquad m=1,\dots,M,
\end{equation}
and that the samples are independent across $m$. Then
\begin{equation}
    \frac{2}{\sigma^2}T_c\sim\chi^2_{2M}.
\end{equation}
Therefore, for a target false-alarm probability $P_{\mathrm{fa}}$,
\begin{equation}
    \gamma=\frac{\sigma^2}{2}\,F^{-1}_{\chi^2_{2M}}(1-P_{\mathrm{fa}})
    \label{eq:gammaexact}
\end{equation}
ensures
\begin{equation}
    \Prob(T_c>\gamma\mid \mathcal H_0)=P_{\mathrm{fa}}.
\end{equation}
For large $M$, the approximation
\begin{equation}
    \gamma\approx \sigma^2\left(M+\sqrt{M}\,z_{1-P_{\mathrm{fa}}}\right)
    \label{eq:gammanorm}
\end{equation}
is obtained from the Gaussian approximation to the chi-square law, where $z_{1-P_{\mathrm{fa}}}$ is the $(1-P_{\mathrm{fa}})$ standard-normal quantile.
\end{theorem}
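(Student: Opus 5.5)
The plan is to establish the distributional identity first, read the exact threshold off the chi-square survival function, and then obtain the large-$M$ formula by moment matching. For the first part, write each sample as $y_c[m]=\frac{\sigma}{\sqrt{2}}(A_m+jB_m)$. Here $A_m,B_m$ are real standard normals. Circular symmetry of $\CN(0,\sigma^2)$ means the real and imaginary parts are independent, each with variance $\sigma^2/2$, so $A_m$ and $B_m$ are independent of each other and, by the assumed independence across $m$, of all other pairs. Then
\begin{equation*}
\frac{2}{\sigma^2}\abs{y_c[m]}^2=A_m^2+B_m^2\sim\chi^2_2 .
\end{equation*}
Summing over $m$ gives
\begin{equation*}
\frac{2}{\sigma^2}T_c=\sum_{m=1}^{M}(A_m^2+B_m^2),
\end{equation*}
a sum of $2M$ independent squared standard normals, which is $\chi^2_{2M}$ by definition. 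Equivalently, one can multiply the moment generating functions $(1-2s)^{-1}$ of the $M$ independent $\chi^2_2$ terms.

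For the threshold, the event $\{T_c>\gamma\}$ is the same as $\{\frac{2}{\sigma^2}T_c>\frac{2\gamma}{\sigma^2}\}$, since the scaling factor is positive. Hence
\begin{equation*}
\Prob(T_c>\gamma\mid\mathcal H_0)=1-F_{\chi^2_{2M}}\!\left(\frac{2\gamma}{\sigma^2}\right).
\end{equation*}
The $\chi^2_{2M}$ CDF is continuous and strictly increasing on $(0,\infty)$, so $F^{-1}$ is a genuine inverse for $P_{\mathrm{fa}}\in(0,1)$. Setting the right-hand side equal to $P_{\mathrm{fa}}$ and solving gives $\frac{2\gamma}{\sigma^2}=F^{-1}_{\chi^2_{2M}}(1-P_{\mathrm{fa}})$, which is \cref{eq:gammaexact}, with equality rather than merely an upper bound.

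For the approximation, I would note that $\frac{2}{\sigma^2}T_c$ has mean $2M$ and variance $4M$, since each $\chi^2_1$ term has mean $1$ and variance $2$. Equivalently, $T_c$ has mean $\sigma^2M$ and variance $\sigma^4M$. By the central limit theorem applied to the $M$ i.i.d.\ terms $\abs{y_c[m]}^2$, each exponential with mean and standard deviation both equal to $\sigma^2$,
\begin{equation*}
\frac{T_c-\sigma^2M}{\sigma^2\sqrt{M}}
\end{equation*}
converges in distribution to $\mathcal N(0,1)$. Replacing the exact law by this normal law and solving
\begin{equation*}
\Prob(T_c>\gamma)\approx 1-\Phi\!\left(\frac{\gamma-\sigma^2M}{\sigma^2\sqrt{M}}\right)=P_{\mathrm{fa}}
\end{equation*}
yields \cref{eq:gammanorm}. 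More precisely, the normalized exact quantile $\bigl(F^{-1}_{\chi^2_{2M}}(1-P_{\mathrm{fa}})-2M\bigr)/(2\sqrt M)$ tends to $z_{1-P_{\mathrm{fa}}}$. This follows because the standard normal CDF is continuous and strictly increasing, so convergence of CDFs implies convergence of quantiles.

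The only delicate point is how "approximation" should be read. The statement claims no error rate, so I would present \cref{eq:gammanorm} as the first-order quantile approximation justified by the CLT, with the quantile convergence argument above. The rest of the proof is routine.
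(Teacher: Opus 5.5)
Your proposal is correct and follows the same route as the paper. You split each sample into independent real and imaginary parts of variance $\sigma^2/2$ to get $\chi^2_2$ terms, sum them to $\chi^2_{2M}$, invert the CDF for the exact threshold, and use the mean $2M$ and variance $4M$ of $\chi^2_{2M}$ for the Gaussian approximation. Your extra justifications (strict monotonicity of the $\chi^2_{2M}$ CDF, giving exact equality, and CLT-based convergence of the normalized quantile to $z_{1-P_{\mathrm{fa}}}$) make rigorous what the paper states informally.
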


\begin{proof}
Under $\mathcal H_0$, each sample satisfies $y_c[m]\sim\CN(0,\sigma^2)$, so its real and imaginary parts are independent $\mathcal N(0,\sigma^2/2)$ random variables. Hence
\begin{equation}
    \frac{2|y_c[m]|^2}{\sigma^2}\sim\chi_2^2.
\end{equation}
Assuming independence across $m$, summing over $M$ samples yields
\begin{equation}
    \frac{2}{\sigma^2}T_c=\sum_{m=1}^{M}\frac{2|y_c[m]|^2}{\sigma^2}\sim\chi_{2M}^2.
\end{equation}
The threshold in \cref{eq:gammaexact} follows by inversion of the cumulative distribution function of $\chi^2_{2M}$. For large $M$, the Gaussian approximation follows from the fact that a chi-square random variable with $2M$ degrees of freedom has mean $2M$ and variance $4M$. Accordingly, replacing $\chi^2_{2M}$ by its normal approximation and rescaling by $\sigma^2/2$ yields \cref{eq:gammanorm}.
\end{proof}

\subsection{Sequential channel allocation rule}
The code forms a running energy estimate for each channel before assigning users. Let $T_{c,\mathrm{noise}}(t)$ denote a noise-only energy draw for channel $c$, generated according to the same $\mathcal H_0$ model. Let $c_j(t-1)$ denote the channel used by user $j$ in the previous slot. Then the initial running energy of channel $c$ at slot $t$ is
\begin{equation}
    E_c^{(0)}(t)=T_{c,\mathrm{noise}}(t)+M\sum_{j:\,c_j(t-1)=c} P_j^{\mathrm{rx}}(t).
    \label{eq:initE}
\end{equation}
The first term models sensing noise, while the second term initializes the channel-energy estimate using the current-slot received-power values of users that occupied channel $c$ in the previous slot.

Users are then processed sequentially in node-index order. If user $k$ is currently being assigned and $E_c^{(k-1)}(t)$ denotes the running energy of channel $c$ after processing users $1,\dots,k-1$, the implemented assignment rule is
\begin{equation}
    c_k(t)=
    \begin{cases}
        \text{the first } c \text{ such that } E_c^{(k-1)}(t)<\gamma, & \text{if such a channel exists},\\
        \argmin_{1\le c\le C} E_c^{(k-1)}(t), & \text{otherwise},
    \end{cases}
    \label{eq:alloc}
\end{equation}
followed by the update
\begin{equation}
    E_{c_k(t)}^{(k)}(t)=E_{c_k(t)}^{(k-1)}(t)+M P_k^{\mathrm{rx}}(t).
    \label{eq:eupdate}
\end{equation}
This is a sequential energy-guided heuristic that is internally consistent but not claimed to be globally optimal.

\section{Adaptive IRS Focus Scheduling}
The simulator selects one focus user per slot and aligns the IRS phases to that user. The focus policy is round-robin during an initial warm-up period, then becomes adaptive.

\subsection{Sliding-window rate and priority weights}
Let
\begin{equation}
    \mathcal W(t)=\{\max(1,t-W+1),\dots,t\}
\end{equation}
denote the current averaging window. The sliding-window average rate of user $k$ is
\begin{equation}
    \bar R_k(t)=\frac{1}{\abs{\mathcal W(t)}}\sum_{\tau\in\mathcal W(t)}R_k(\tau).
    \label{eq:rbar}
\end{equation}
The priority weight is then defined as
\begin{equation}
    w_k(t)=\frac{1}{(\bar R_k(t)+\epsilon)^\beta},
    \label{eq:wk}
\end{equation}
where $\epsilon>0$ prevents division by zero and $\beta\ge 1$ controls how aggressively low-rate users are prioritized.

The normalized sampling probabilities are
\begin{equation}
    p_k(t)=\frac{w_k(t)}{\sum_{j=1}^{K}w_j(t)}.
    \label{eq:pk}
\end{equation}
After the warm-up period, the focus user for slot $t$ is sampled according to the distribution formed from the most recently available average rates, i.e.,
\begin{equation}
    k^{\star}(t)\sim\{p_k(t-1)\}_{k=1}^{K},
    \label{eq:focussample}
\end{equation}
with the understanding that, in implementation, the probabilities are updated from the rate history accumulated up to the preceding slot. In the reported run, $W=20$, $\beta=2$, and the first $W$ slots use round-robin focus assignment.

It should also be noted that the adaptive policy is fairness-aware by construction, since users with lower recent rates receive higher sampling probabilities. However, this work makes no claim of proportional-fair convergence, optimality, or guaranteed fairness attainment. The scheduler is used here as a lightweight adaptive mechanism rather than as a theorem-driven control law.

\begin{algorithm}[t]
\caption{Code-matched adaptive IRS scheduling and channel allocation}
\label{alg:main}
\begin{algorithmic}[1]
\Require Network parameters, IRS parameters, window $W$, exponent $\beta$, regularization $\epsilon$, sensing parameters $M$ and $P_{\mathrm{fa}}$, decode threshold $\gamma_{\mathrm{dec}}$.
\State Initialize geometry, random seed, node positions, node velocities, small-scale channels, rate histories, average-rate variables, and previous channel assignments.
\For{$t=1$ to $T$}
    \State Update user kinematics with reflective boundaries.
    \State Update direct-link and user-to-IRS small-scale channels via coherence-time redraws and Doppler phase evolution.
    \If{$t\le W$}
        \State Select the focus user by round robin.
    \Else
        \State Compute the most recently available sliding-window average rates, then form the weights and sampling probabilities according to \cref{eq:rbar}--\cref{eq:pk}.
        \State Sample the focus user according to \cref{eq:focussample}.
    \EndIf
    \State Compute the IRS phase profile for the selected focus user and quantize it to $2^b$ levels.
    \State Compute the direct, cascaded, and total channels, and then compute the received powers $P_k^{\mathrm{rx}}(t)$ for all users.
    \State Compute the exact energy-detection threshold $\gamma$ from \cref{eq:gammaexact}.
    \State Initialize the running channel energies according to \cref{eq:initE}.
    \For{$k=1$ to $K$}
        \State Assign $c_k(t)$ according to \cref{eq:alloc}.
        \State Update the corresponding running channel energy according to \cref{eq:eupdate}.
    \EndFor
    \State Compute each user SINR via \cref{eq:sinr} and the corresponding rate via \cref{eq:rate}.
    \State Set the rate to zero for users with $\mathrm{SINR}_k(t)<\gamma_{\mathrm{dec,lin}}$.
    \State Update the sliding-window average rates and priority variables for use in the next slot.
\EndFor
\State Output node-level rates, average SINRs, focus fractions, and network-level summary metrics.
\end{algorithmic}
\end{algorithm}

\section{Illustrative Seeded Numerical Example}
The simulation parameters used in the code-generated run are listed in \cref{tab:params}. The run uses fixed seed $42$ and spans $T=200$ slots, i.e., $1$ s with slot duration $\Delta t=5$ ms.

The MATLAB code reports average SINR in the following exact form:
\begin{equation}
    \mathrm{AvgSINR}_k = 10\log_{10}\!\left(\frac{1}{T}\sum_{t=1}^{T}\mathrm{SINR}_k(t)\right).
    \label{eq:avgsinr}
\end{equation}
Thus, the averaging is carried out in the linear domain and converted to dB afterward. This is important when comparing average SINR and average rate, because the rate is thresholded and depends on instantaneous channel sharing.

\begin{table}[H]
\centering
\caption{Simulation parameters for the demonstrative run.}
\label{tab:params}
\small
\begin{tabularx}{\linewidth}{@{}lX@{}}
\toprule
Parameter & Value \\
\midrule
Carrier frequency & $f_c=3.5$ GHz ($\lambda=c/f_c$) \\
Number of channels / bandwidth & $C=4$, $B=5$ MHz \\
Nodes / IRS size & $K=10$, $N=8\times 8=64$ \\
Element spacing / quantization & $\lambda/2$, $b=3$ bits (8 levels) \\
Reflection efficiency & $\rho=0.98$ \\
Path-loss exponent & $\alpha=2.2$ \\
Transmit power / noise figure & $P_{\mathrm{tx}}=20$ dBm, NF $=6$ dB \\
Noise power & $\sigma^2=k_B T_0 B\,10^{\mathrm{NF}/10}$, with $T_0=290$ K and $\mathrm{NF}=6$ dB \\
Decode threshold & $\gamma_{\mathrm{dec}}=-10$ dB \\
Sensing parameters & $M=128$, $P_{\mathrm{fa}}=0.1$ (exact $\chi^2$ threshold) \\
Mobility / slotting & $v_{\max}=3$ m/s, $\Delta t=5$ ms, $T=200$ \\
Scheduler parameters & $W=20$, $\beta=2$; warm-up: round robin for the first $W$ slots \\
BS / IRS-center positions & $\vect b=[0,0,10]^\top$ m, $\vect r_0=[30,0,8]^\top$ m \\
Initialization region / motion & $\mathcal A=[-50,50]\times[-50,50]\times[0,3]$ m; random planar heading, speed $\le v_{\max}$, reflective boundaries \\
\bottomrule
\end{tabularx}
\end{table}

\subsection{Per-node statistics}
\Cref{tab:nodes} reproduces the per-node average SINR, average rate, and IRS focus fraction reported by the code for the seeded adaptive run. The results should be interpreted with caution: the scheduler reallocates focus time to several lower-rate users, but the resulting operating point remains strongly heterogeneous across the network.

\begin{table}[H]
\centering
\caption{Per-node statistics for the adaptive run.}
\label{tab:nodes}
\small
\begin{tabularx}{\linewidth}{@{}cS[table-format=2.2]S[table-format=2.2]S[table-format=2.1]X@{}}
\toprule
Node & {Avg SINR} & {Avg Rate} & {IRS Focus} & Comment \\
 & {(dB)} & {(Mbps)} & {(\%)} & \\
\midrule
1  & -8.78 & 0.68 & 25.0 & low-rate and heavily prioritized \\
2  & -0.20 & 1.87 & 8.0  & moderate \\
3  & 21.47 & 1.37 & 10.0 & favorable SINR but intermittent throughput \\
4  & -10.79 & 0.35 & 12.5 & below the decode threshold on average \\
5  & -8.97 & 0.55 & 14.0 & low-rate and prioritized \\
6  & -6.38 & 1.08 & 10.5 & low-to-moderate \\
7  & 32.71 & 6.70 & 1.5  & strong \\
8  & 31.50 & 4.24 & 16.5 & strong, but still given notable focus time \\
9  & 30.91 & 5.94 & 1.0  & strong \\
10 & 43.43 & 19.69 & 1.0 & dominant \\
\bottomrule
\end{tabularx}
\end{table}

\begin{figure}[H]
    \centering
    \includegraphics[width=\linewidth]{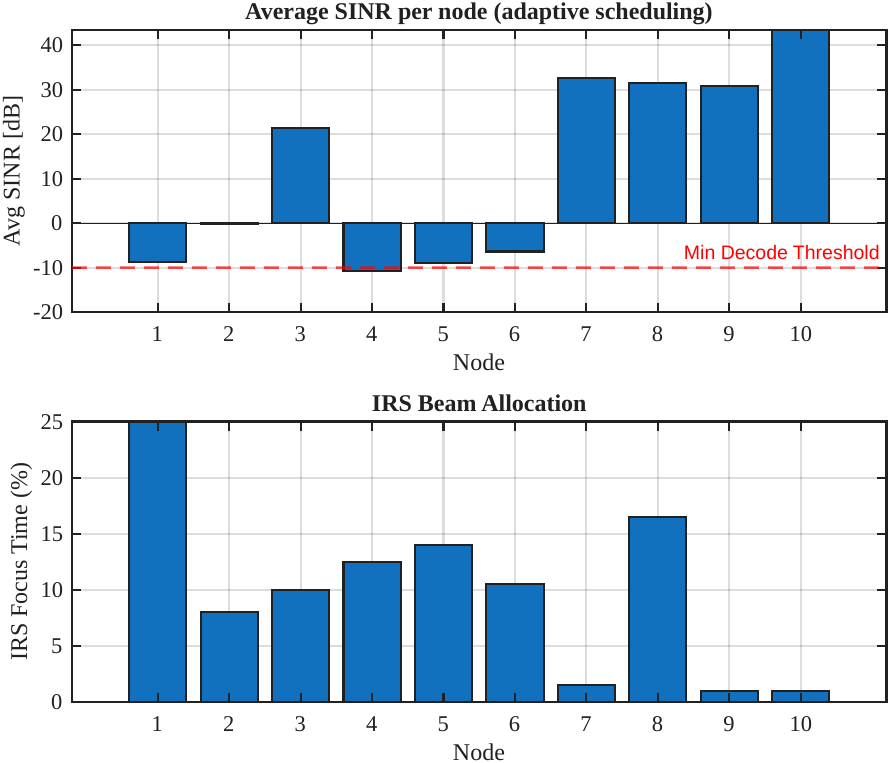}
    \caption{Per-node average SINR together with the minimum decode threshold, as well as the corresponding IRS focus-time allocation for the seeded adaptive run.}
    \label{fig:sinr_focus}
\end{figure}

\subsection{Network-level behavior}
\Cref{tab:netmetrics} summarizes the network-level outcomes of the adaptive run. The reported fairness figures are presented without optimistic interpretation: although the scheduling rule is fairness-aware by construction, the resulting operating point remains highly unequal, and Node~4 stays below the decode threshold on average. \Cref{fig:timeseries} shows the corresponding temporal sum-rate and per-node rate trajectories. These curves are useful as implementation-faithful diagnostics of the seeded run, but they should not be interpreted as a statistically complete performance characterization.

\begin{table}[H]
\centering
\caption{Network-level metrics for the seeded adaptive run.}
\label{tab:netmetrics}
\small
\begin{tabularx}{0.82\linewidth}{@{}lX@{}}
\toprule
Metric & Value \\
\midrule
Average sum rate & 42.47 Mbps \\
Jain's fairness index & 0.366 \\
Min/max rate ratio & 0.018 \\
Nodes below decode threshold on average ($-10$ dB) & 1 (Node 4) \\
\bottomrule
\end{tabularx}
\end{table}

\begin{figure}[H]
    \centering
    \includegraphics[width=\linewidth]{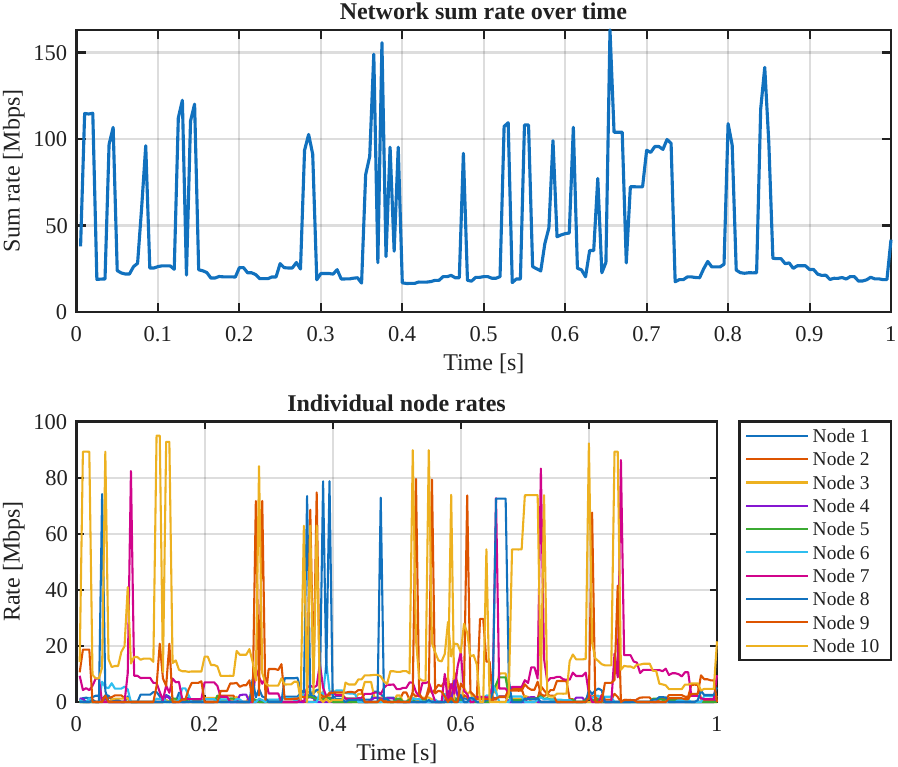}
    \caption{Temporal trajectories for the adaptive run: total network sum rate (top) and individual rates (bottom). It depicts strong temporal variability and persistent heterogeneity across users.}
    \label{fig:timeseries}
\end{figure}

The example should be viewed as a foundational demonstration of the proposed framework rather than a statistically complete performance characterization. In particular, it should not be used to claim broad success in fairness. Further study would require multi-seed averaging, sensitivity sweeps, and explicit baseline comparisons under different settings. It should also be noted that although the specific user identities and numerical values vary with the random seed, the qualitative behavior is expected to remain similar across runs: weaker users tend to receive higher IRS focus probability, while performance heterogeneity may still persist.

\section{Conclusion}
This paper has presented a self-contained, implementation-aligned framework for a simplified, mobility-aware, IRS-assisted multi-node uplink. In this adopted model, the framework combines direct and IRS-assisted propagation, geometric IRS phase control with finite-bit quantization, inverse-rate adaptive IRS-focus selection, and sequential energy-guided channel allocation. The analytical development has been deliberately restricted to statements that are supported by the stated assumptions and by the implemented simulator, namely the physics-based two-hop cascaded path-loss structure, the expectation-level reflected-power characterization under explicit independence assumptions, and the exact chi-square threshold used for energy detection.

In this form, the value of the work lies in providing a coherent baseline that is mathematically consistent within its own scope, transparent in its assumptions, and suitable for further refinement. The numerical example shows that the adaptive focusing rule is fairness-aware by construction, since weaker users tend to receive more IRS focus time, but it also confirms that performance disparity can persist, thereby serving as a foundational reference for subsequent extensions.

Future directions can address the limitations identified in this work. First, the study can be expanded beyond a single-seeded run to include confidence intervals and comparisons against benchmark scheduling and other low-complexity control policies. Second, the mobility and fading models can be strengthened by richer temporal correlations, more realistic CSI acquisition, and control-delay effects, particularly for dynamically reconfigured IRSs. Third, the scheduling and channel-allocation methods can be generalized toward learning-based designs, backed by a more systematic fairness-throughput tradeoff analysis under a wider range of deployment conditions. Lastly, broader validation across network densities, IRS sizes, resolutions, and sensing settings would help distinguish outcomes that are more robust in specific propagation scenarios.

\end{document}